\theoremstyle{plain} 
\newtheorem{lemma}[equation]{Lemma} 
\newtheorem{theorem}[equation]{Theorem}
\theoremstyle{definition}
\newtheorem{definition}[equation]{Definition} 
\theoremstyle{remark}
\numberwithin{equation}{section}
\title{Uniqueness theorem for analytic functions and its application in denoising problem}
\author[ A.~Vagharshakyan ] {Ashot Vagharshakyan }
\thanks{}
\address{ Institute of Mathematics National Academy of 
Siences of Armenia, Bagramian 24-b, 375019 Yerevan, Armenia.}
\email{vagharshakyan@yahoo.com}
\begin{document}
\begin{abstract}
	In various applications the problem of separation of the original signal and the noise arises. 
        For example, in the identification problem for discrete linear and causal systems, 
        the original signal consists of the values of transfer function at some points in the unit disk. 
        In this paper we discuss the problem of choosing the points in the unite disk, 
        for which it is possible to remove the additive noise with probability one. 
        Since the transfer function is analytic in the unite disk, so this problem is related to the uniqueness theorems for analytic functions. 
        Here we give a new uniqueness result for bounded analytic functions and show its applications in the denoising problem.
\end{abstract}

\maketitle

\section{Introduction.}

In this paper we discuss a uniqueness theorem for some classes 
of bounded analytic functions, where the conclusion $f(z)\equiv 0$ follows 
from the condition that $f(z_n)$ goes to zero over some sequence of 
points $z_n,\quad n=1, 2, \dots$. Many results of S.J. Havinson 
are dedicated to this type results and they applications in approximation
theory. This type results one can find in 
\cite{MR808672}.

	The survey on mentioned above type uniqueness results for analytic functions one 
can find in \cite{MR1965913}. A new result of that type we give in this paper. Furder 
we apply that result in de-noising problem.

\section{Auxiliary results}
At first let us give here some auxiliary definitions and results.
\begin{definition}
Let $h(t),\quad 0<t,$ be a continuous, non-negative function. 
Let the family of arcs $\{S_k\}_{k=1}^{\infty}$ cover a given set $E$;
\begin{equation}
E\subseteq \bigcup_{k=1}^{\infty}S_k.
\end{equation}
Let us put
\begin{equation}
M_h(E)=\inf \sum_{k=1}^{\infty}h(|S_k|),
\end{equation}
where $|S|$ is the length of the arc $S$ and the minimum is taken over the family of all covers.
\end{definition}
\begin{definition}
One say that a function $f(t),\quad -\pi<t<\pi,$ belongs Bessov space $B_2^{\alpha}$ if
\begin{equation}
||f||_{\alpha}^2=\int_{-\pi}^{\pi}|f(x)|^2dx+
\int_{-\pi}^{\pi}\int_{-\pi}^{\pi}\frac{|f(x)-f(y)|^2}{|x-y|^{1+2\alpha}}dxdy<\infty,
\end{equation}
where $0<\alpha<1$.
\end{definition}
	For arbitrary function $g(t)\in L_1(-\pi,\pi)$ let us denote
\begin{equation}
g^*(t)=\sup_{0<\delta}\frac{1}{2\delta}\int_{t-\delta}^{t+\delta}|g(x)|dx,
\end{equation}
where $g(t)$ is assumed to be continued as a $2\pi$ periodic function on $(-\infty,+\infty)$.

	For a subset $E$ the quantity
\begin{equation}
C_{\alpha}(E)=\left(\inf_{\mu}\int_{-\pi}^{\pi}\int_{-\pi}^{\pi}\frac{d\sigma(x)d\sigma(y)}{|x-y|^{\alpha}}\right)^{-1},
\end{equation}
where the minimum is taken over the probability measures with support in $E$, is known as $\alpha-$ 
capacity of the subset $E$.

The following lemma is announced in the book \cite{MR0225986}, p.35.
\begin{lemma}
 Let $E$ be Borelian set and $C_{\alpha}(E)=0,$ where $0<\alpha<1$. Let $0\leq h(r), \quad r>0,$ 
be an increasing function and 
\begin{equation}
\int_0\frac{dh(r)}{r^{\alpha}}<\infty.
\end{equation}
Then
\begin{equation}
M_h(E)=0.
\end{equation}
\end{lemma}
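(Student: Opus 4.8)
The plan is to prove the contrapositive: assuming $M_h(E)>0$, I will produce a probability measure supported in $E$ with finite $\alpha$-energy, which by the very definition of $C_\alpha$ forces $C_\alpha(E)>0$. First I would dispose of a degenerate case: since $h$ is increasing, $h(|S|)\ge h(0^{+})$ for every arc $S$, so if $h(0^{+})>0$ then $M_h(E)>0$ for every non-empty $E$ and the asserted implication is empty; hence I may and will assume $h(0^{+})=0$. I will also use the standard observation that the set function $M_h$ is the Hausdorff content attached to the gauge $h$, so that $M_h(E)>0$ is equivalent to $\Lambda_h(E)>0$ for the corresponding Hausdorff measure $\Lambda_h$, and --- $E$ being Borel --- there is then a compact subset $K\subseteq E$ with $0<\Lambda_h(K)<\infty$.

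The core of the argument is a mass--distribution (Frostman) step: on such a compact $K$ one can construct a non-zero positive measure $\mu$, $\operatorname{supp}\mu\subseteq K\subseteq E$, obeying $\mu(I)\le A\,h(|I|)$ for every arc $I$, with an absolute constant $A$. This is the step I expect to be the main obstacle --- either I invoke Frostman's lemma as a black box, or I must reproduce the dyadic splitting that proves it, together with the capacitability facts licensing the reduction from $E$ to $K$; everything after this point is computation.

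Given $\mu$, I would estimate its energy as follows. For fixed $x$ put $m_x(r)=\mu(\{y:|x-y|\le r\})\le A\,h(2r)$, a quantity that is constant ($=\mu(K)$) once $r$ exceeds the diameter of $K$. The layer-cake formula gives
\[
\int\frac{d\mu(y)}{|x-y|^{\alpha}}=\alpha\int_{0}^{\infty}\frac{m_x(r)}{r^{\alpha+1}}\,dr\le \alpha A\int_{0}^{\delta}\frac{h(2r)}{r^{\alpha+1}}\,dr+C_{1},
\]
with $C_1$ depending only on $\mu(K),\delta,\alpha$. Substituting $s=2r$ and integrating by parts in the Stieltjes sense --- legitimate because $s^{-\alpha}h(s)\to0$ as $s\to0^{+}$, which follows from $\int_{0}s^{-\alpha}\,dh(s)<\infty$ together with $h(0^{+})=0$ --- I obtain
\[
\alpha\int_{0}^{\delta}\frac{h(2r)}{r^{\alpha+1}}\,dr=2^{\alpha}\Bigl(\int_{0}^{2\delta}\frac{dh(s)}{s^{\alpha}}-(2\delta)^{-\alpha}h(2\delta)\Bigr)\le 2^{\alpha}\int_{0}^{2\delta}\frac{dh(s)}{s^{\alpha}}<\infty .
\]
Thus the potential $\int|x-y|^{-\alpha}\,d\mu(y)$ is bounded above by a finite constant $C_2$ independent of $x$, and integrating in $x$ yields $\iint|x-y|^{-\alpha}\,d\mu(x)\,d\mu(y)\le C_2\,\mu(K)<\infty$.

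Finally I would normalize: $\sigma=\mu/\mu(K)$ is a probability measure supported in $E$ with $\iint|x-y|^{-\alpha}\,d\sigma\,d\sigma<\infty$, so the infimum in the definition of $C_\alpha(E)$ is finite and $C_\alpha(E)>0$ --- contradicting $C_\alpha(E)=0$. Hence $C_\alpha(E)=0$ implies $M_h(E)=0$. The only non-routine ingredient is the Frostman measure; the hypothesis on $h$ enters solely through the elementary Stieltjes integration by parts above, which converts $\int_0 r^{-\alpha}\,dh(r)<\infty$ into the finiteness of $\int_0 r^{-\alpha-1}h(r)\,dr$ that the energy bound needs.
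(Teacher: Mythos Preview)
The paper does not actually prove this lemma: it merely states it and attributes it to Carleson's book \cite{MR0225986}, p.~35. So there is no in-paper proof to compare against. Your argument---Frostman's lemma to produce a measure $\mu$ supported in $E$ with $\mu(I)\lesssim h(|I|)$, followed by the layer-cake/Stieltjes computation converting $\int_0 r^{-\alpha}\,dh(r)<\infty$ into a bound on the Riesz $\alpha$-energy of $\mu$---is precisely the classical route one finds in Carleson and in standard potential-theory texts, and it is correct.

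One small wording issue: your treatment of the case $h(0^{+})>0$ is not quite right. You call the implication ``empty'' in that case, but in fact it would be \emph{false} (any singleton has $C_\alpha=0$ yet $M_h>0$), so $h(0^{+})=0$ is a genuine implicit hypothesis of the lemma rather than a degenerate case you can wave away. In the paper's setting $h$ is assumed continuous and is used as a Hausdorff gauge, so $h(0^{+})=0$ is tacitly understood; just state it as an assumption rather than arguing the implication is vacuous.
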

\begin{lemma}
Let $f(x)\in B_2^{\alpha}$ and $0<\beta\leq\alpha$. 
Then there is a function $g(x)\in B_2^{\alpha-\beta}$ such that
\begin{equation}
\frac{1}{2\pi}\int_{-\pi}^{\pi}\frac{1-|z|^2}{|z-e^{it}|^2}f(t)dt=
\end{equation}
\begin{equation}
=\frac{1}{2\pi}\int_{-\pi}^{\pi}\frac{g(t)}{(z-e^{it})^{1-\beta}}dt+
\frac{1}{2\pi}\int_{-\pi}^{\pi}\frac{g(t)}{({\bar z}-e^{it})^{1-\beta}}dt,\quad |z|<1.
\end{equation}
\end{lemma}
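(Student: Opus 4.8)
The plan is to pass from the claimed identity to an equality of Taylor (equivalently Fourier) coefficients, to recognise the right-hand side as a fractional integration of order $\beta$ applied to $g$, and therefore to obtain $g$ as a fractional derivative of order $\beta$ of $f$; the assertion $g\in B_2^{\alpha-\beta}$ then records exactly that this operation costs $\beta$ derivatives on the scale $B_2^{s}$ (with $B_2^{0}:=L^{2}$ when $\beta=\alpha$).

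First I would expand the two kernels in powers of $z$ and of $\bar z$. Fixing the natural analytic branch of $w\mapsto w^{1-\beta}$, the binomial series gives an expansion of the form
\begin{equation*}
\frac{1}{(z-e^{it})^{1-\beta}}=\sum_{k=0}^{\infty}\gamma_k\,z^{k}\,\omega_k(t),
\qquad
\gamma_k:=\frac{\Gamma(k+1-\beta)}{\Gamma(1-\beta)\,k!},
\end{equation*}
with $|\omega_k(t)|=1$, absolute and uniform convergence for $|z|\le r<1$, and, by Stirling's formula, $\gamma_k=\Gamma(1-\beta)^{-1}k^{-\beta}(1+o(1))$, so that $\gamma_k\asymp(1+k)^{-\beta}$; the second kernel expands the same way in $\bar z$, the two contributing respectively to the holomorphic and the antiholomorphic parts. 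On the left, writing $f(t)\sim\sum_{n\in\mathbb Z}a_n e^{int}$, the Poisson integral of $f$ equals $\sum_{n\ge0}a_n z^{n}+\sum_{n\ge1}a_{-n}\bar z^{n}$ for $|z|<1$.

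Inserting these expansions on the right and integrating term by term — legitimate once the coefficients of $g$ are bounded — reduces the identity to scalar equations obtained by equating the coefficients of $z^{k}$ and $\bar z^{k}$. They prescribe the corresponding Fourier-type coefficients of $g$ to equal $a_{\pm k}/\gamma_k$ up to a fixed unimodular factor, the constant term being treated separately. Since $\gamma_k\asymp(1+k)^{-\beta}$ and $\beta\le\alpha$, one may simply define $g$ to be the function with these coefficients; it is the image of $f$ under a Fourier multiplier of order $\beta$ (a Weyl-type fractional derivative of order $\beta$), and it lies in $L^{2}(-\pi,\pi)$. Resumming the power series then shows both that the two integrals on the right converge absolutely for $|z|<1$ — their coefficients satisfy $\gamma_k|\widehat g(k)|\asymp|a_k|\in\ell^{2}$, so the radius of convergence is at least $1$ — which justifies the termwise integration, and that, after the constant term is checked directly, the sum is indeed the Poisson integral of $f$.

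There remains the main point: $g\in B_2^{\alpha-\beta}$, i.e.\ the boundedness of the fractional differentiation of order $\beta$ from $B_2^{\alpha}$ to $B_2^{\alpha-\beta}$. The efficient route is the classical equivalence of the Gagliardo norm in the definition of $B_2^{s}$ with the Fourier-side norm $\bigl(\sum_{n}(1+|n|)^{2s}|\widehat h(n)|^{2}\bigr)^{1/2}$, which follows by applying Plancherel's theorem to the double integral there. Granting it, the relation $|\widehat g(n)|\asymp(1+|n|)^{\beta}|a_n|$ yields
\begin{equation*}
\sum_{n}(1+|n|)^{2(\alpha-\beta)}|\widehat g(n)|^{2}\ \asymp\ \sum_{n}(1+|n|)^{2\alpha}|a_n|^{2}\ \asymp\ \|f\|_{\alpha}^{2}<\infty ,
\end{equation*}
so that $g\in B_2^{\alpha-\beta}$. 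The principal obstacles are thus the harmonic-analytic bookkeeping — the branch of $(\cdot)^{1-\beta}$ in the two kernels, the separate treatment of the constant terms, and the justification of the termwise integration — together with the above norm equivalence, which is precisely what turns the elementary multiplier estimate for $g$ into the required Besov bound.
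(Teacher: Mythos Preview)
Your approach is essentially the same as the paper's: both expand the Poisson kernel and the fractional kernels in Fourier/Taylor series, match coefficients via the identity $\gamma_k=\Gamma(1+k-\beta)/(\Gamma(1-\beta)\Gamma(1+k))$, define $g$ as the Fourier multiplier $a_k\mapsto a_k/\gamma_k$, and then read off $g\in B_2^{\alpha-\beta}$ from the Fourier-side description $\sum|a_k|^2|k|^{2\alpha}<\infty$ of $B_2^{\alpha}$. Your write-up is in fact more careful than the paper's sketch about convergence, the constant term, and the norm equivalence.
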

\begin{proof}
Let 
\begin{equation}
f(x)\sim\sum_{k=-\infty}^{\infty}a_ke^{ikx}.
\end{equation}
be the Fourier series of the function $f(x)$. Since $f(x)\in B_2^{\alpha}$, so
\begin{equation}
\sum_{k=-\infty}^{\infty}|a_k|^2|k|^{2\alpha}<\infty.
\end{equation}
We can write
\begin{equation}
F(re^{ix})=\sum_{k=-\infty}^{\infty}a_kr^{|k|}e^{ikx}=
\end{equation}
\begin{equation}
=\frac{1}{2\pi}\int_{-\pi}^{\pi}\left(\sum_{k=-\infty}^{\infty}\frac{\Gamma(1+|k|-\beta)}{\Gamma(1-\beta)\Gamma(1+|k|)}r^{|k|}e^{ik(x-t)}\right)g(t)dt,
\end{equation}
where 
\begin{equation}
g(t)=\sum_{k=-\infty}^{\infty}\frac{\Gamma(1-\beta)\Gamma(1+|k|)}{\Gamma(1+|k|-\beta)}e^{ixk}.
\end{equation}
\end{proof}
\begin{lemma}
 Let $f(x)\in B_2^{\alpha},$ where $0<\alpha<1$ and $0<\beta\leq \alpha$. Then there is a subset 
$F\in\{z; |z|=1\}$ satisfying condition 
\begin{equation}
C_{\alpha-\beta}(F)=0
\end{equation}
end for each $e^{i x}\in F$ there is a number $A(x)$ such that the following inequality holds
\begin{equation}
|F(z_1)-F(z_2)|<A(x)|z_1-z_2|^{\alpha-\beta},
\end{equation}
where
\begin{equation}
|e^{ix}-z_j|<2(1-|z_j|),\quad j=1,2,
\end{equation}

	Here $F(z)$ is the harmonic function with the boundary values $f(x)$, i.e.
\begin{equation}
\lim_{r\to 1-}F(re^{ix})=f(x).
\end{equation}
\end{lemma}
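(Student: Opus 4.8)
The plan is to derive from the preceding lemma a non‑tangential gradient bound for $F$ that holds off a set of $C_{\alpha-\beta}$‑capacity zero, and then to integrate it. Put $\gamma=\alpha-\beta$ and write, as in that lemma, $F(z)=\frac{1}{2\pi}\int_{-\pi}^{\pi}P_\beta(z,t)\,g(t)\,dt$ with $g\in B_2^{\gamma}$, where $P_\beta$ is the kernel built there; from its expansion $P_\beta(z,t)=\sum_k\frac{\Gamma(1+|k|-\beta)}{\Gamma(1-\beta)\Gamma(1+|k|)}\,|z|^{|k|}e^{ik(\arg z-t)}$ one reads off the size bounds $|P_\beta(z,t)|\lesssim|z-e^{it}|^{-(1-\beta)}$, $|\nabla_z P_\beta(z,t)|\lesssim|z-e^{it}|^{-(2-\beta)}$, and — since only the term $k=0$ has nonzero integral — the identity $\frac{1}{2\pi}\int_{-\pi}^{\pi}P_\beta(z,t)\,dt\equiv 1$. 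Differentiating this constant under the integral sign gives $\int_{-\pi}^{\pi}\nabla_z P_\beta(z,t)\,dt\equiv 0$, so for every fixed $x$ one may subtract $g(x)$ for free:
\[
\nabla F(z)=\frac{1}{2\pi}\int_{-\pi}^{\pi}\nabla_z P_\beta(z,t)\,\bigl(g(t)-g(x)\bigr)\,dt,\qquad |\nabla F(z)|\lesssim\int_{-\pi}^{\pi}\frac{|g(t)-g(x)|}{|z-e^{it}|^{\,2-\beta}}\,dt .
\]

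Next I would localise on the region in question, $\Gamma_x:=\{z:|e^{ix}-z|<2(1-|z|)\}$, which is exactly a Stolz angle (a cone of bounded aperture) with vertex $e^{ix}$; for $z\in\Gamma_x$ one has $|z-e^{it}|\asymp(1-|z|)+|e^{ix}-e^{it}|$. Writing $\rho=1-|z|$ and $n(x,s)=\int_{|e^{ix}-e^{it}|<s}|g(t)-g(x)|\,dt$, a dyadic splitting of the integral above gives $|\nabla F(z)|\lesssim\sum_{j\ge 0}(2^{j}\rho)^{-(2-\beta)}\,n(x,2^{j}\rho)$. By Cauchy--Schwarz $n(x,s)\le\sqrt{2s}\,\bigl(\int_{|e^{ix}-e^{it}|<s}|g(t)-g(x)|^{2}\,dt\bigr)^{1/2}$, so it is enough to know that $x$ satisfies the pointwise Besov bound $\int_{|e^{ix}-e^{it}|<s}|g(t)-g(x)|^{2}\,dt\le A(x)^{2}s^{1+2\gamma}$ for all $0<s\le\pi$; call this $(\star)$. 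Under $(\star)$ one has $n(x,s)\lesssim A(x)s^{1+\gamma}$, the $j$-th summand becomes $A(x)(2^{j}\rho)^{\alpha-1}$ (a negative power of $2^{j}\rho$), and the geometric series sums to $|\nabla F(z)|\lesssim A(x)\rho^{\alpha-1}\le A(x)\,(1-|z|)^{\gamma-1}$ throughout $\Gamma_x$.

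The main obstacle is to show that $(\star)$ can fail only on a set $F$ with $C_\gamma(F)=0$. Since $\int_{|e^{ix}-e^{it}|<s}|g(t)-g(x)|^{2}\,dt\le s^{1+2\gamma}\phi(x)$ with $\phi(x)=\int_{-\pi}^{\pi}|g(t)-g(x)|^{2}\,|e^{ix}-e^{it}|^{-(1+2\gamma)}\,dt$, the bound $(\star)$ holds at every $x$ with $\phi(x)<\infty$, and $\int_{-\pi}^{\pi}\phi(x)\,dx\lesssim\|g\|_{\gamma}^{2}<\infty$ only yields this a.e. Upgrading ``a.e.'' to ``$C_\gamma$‑quasi‑everywhere'' is the heart of the lemma: $\phi\in L^{1}$ by itself is far too weak (an $L^1$ function can be $+\infty$ on a set of positive $C_\gamma$‑capacity), and one must use that $\phi$ is the fractional‑energy density of a $B_2^{\gamma}$ function. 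I would prove the capacitary weak‑type inequality $\lambda\,C_\gamma(\{\phi>\lambda\})\lesssim\|g\|_{\gamma}^{2}$ — whence $C_\gamma(\{\phi=+\infty\})=0$, and one takes $F=\{\phi=+\infty\}$ — by duality: if $C_\gamma(\{\phi>\lambda\})>c$, there is a probability measure $\mu$ on $\{\phi>\lambda\}$ of finite Riesz energy $\iint|x-y|^{-\gamma}\,d\mu(x)\,d\mu(y)<1/c$, so $\lambda\le\int\phi\,d\mu$; the required estimate then reduces to a Maz'ya‑type trace inequality bounding $\int\phi\,d\mu$ by $\|g\|_{\gamma}^{2}$ times a power of the $\gamma$‑energy of $\mu$, which one would obtain by writing $g$ as a constant plus a Bessel potential of order $\gamma$ of an $L^{2}$ function (so that $g(t)-g(x)$ is a fractional integral of it) and applying Cauchy--Schwarz. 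Establishing this trace inequality — and in particular landing on the Riesz $(\alpha-\beta)$‑capacity rather than the Bessel capacity that $B_2^{\alpha-\beta}$‑membership more obviously controls — is the delicate point, and is precisely where the amount $\beta$ of fractional smoothing absorbed into $g$ by the preceding lemma is what makes the conclusion go through.

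Finally, the passage from the gradient estimate $|\nabla F(z)|\lesssim A(x)(1-|z|)^{\gamma-1}$ on $\Gamma_x$ (valid for $e^{ix}\notin F$) to the stated inequality $|F(z_1)-F(z_2)|\lesssim A(x)\,|z_1-z_2|^{\alpha-\beta}$ for $z_1,z_2\in\Gamma_x$ is the routine Hardy--Littlewood path‑integration: one joins $z_1$ and $z_2$ inside a slightly wider cone by a rectifiable path of length $\lesssim|z_1-z_2|$ which reaches depth $\gtrsim|z_1-z_2|$ and is otherwise made of radial segments, and integrates the gradient bound — the excursions of the path toward the circle costing only $\lesssim A(x)\int_{0}^{C|z_1-z_2|}u^{\gamma-1}\,du\asymp A(x)|z_1-z_2|^{\gamma}$ because $\gamma>0$. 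Together with $C_\gamma(F)=0$ this is exactly the assertion of the lemma; note that $\gamma>0$, i.e. $\beta<\alpha$, is genuinely used at this last step.
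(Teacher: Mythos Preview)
Your three–stage outline (non-tangential gradient bound from the kernel representation, then Hardy--Littlewood path integration in the Stolz angle, then a capacitary exceptional set) is exactly the paper's plan. The difference is in how the gradient is controlled, and that difference is what makes your version harder than it needs to be.

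The paper does \emph{not} subtract $g(x)$. It differentiates the representation of the preceding lemma to get
\[
\frac{\partial F}{\partial z}(z)=\frac{1-\beta}{2\pi}\int_{-\pi}^{\pi}\frac{g(t)}{(z-e^{it})^{2-\beta}}\,dt,
\]
and then bounds $(1-|z|)^{1-\beta}\bigl|\partial F/\partial z\bigr|$ by splitting the $t$–integral at scale $1-|z|$; both pieces are dominated by the ordinary Hardy--Littlewood maximal function $g^{*}(x)$, so for $z$ in the Stolz angle at $e^{ix}$ one has $|\nabla F(z)|\lesssim g^{*}(x)\,(1-|z|)^{\beta-1}$. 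The exceptional set is therefore simply $F=\{x:\,g^{*}(x)=\infty\}$, and the paper just invokes the classical fact (Carleson's theory of exceptional sets) that for $g\in B_{2}^{\alpha-\beta}$ this set has $C_{\alpha-\beta}(F)=0$. No trace inequality, no duality, no $\phi$.

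Your cancellation trick $\int\nabla_{z}P_{\beta}\,dt=0$ is correct and does buy you the sharper exponent $(1-|z|)^{\alpha-1}$ in place of $(1-|z|)^{\beta-1}$, but the lemma does not need that extra sharpness, and the price is exactly the part you flag as ``the heart'' and ``the delicate point'': proving $C_{\alpha-\beta}(\{\phi=\infty\})=0$ for the Besov energy density $\phi(x)=\int|g(t)-g(x)|^{2}\,|e^{ix}-e^{it}|^{-(1+2\gamma)}\,dt$. Your Maz'ya-type sketch is plausible but not carried out, and this condition is genuinely stronger than the maximal-function one (it is a pointwise $\gamma$-H\"older moment, not just a Lebesgue-point statement). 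In short: drop the subtraction, bound $n(x,s)\le 2s\,g^{*}(x)$ directly, and the capacitary step collapses to a one-line citation---which is what the paper does.
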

\begin{proof}
 Thanks of the previous lemma there is a function $g(x)\in B_2^{\alpha-\beta}$ such that
\begin{equation}
\frac{\partial F(z)}{\partial z}=\frac{1-\beta}{2\pi}\int_{-\pi}^{\pi}\frac{g(t)}{(z-e^{it})^{2-\beta}}dt.
\end{equation}
Taking into account the inequalities 
\begin{equation}
1-|z|<|e^{it}-z|,\quad \left|e^{it}-\frac{z}{|z|}\right|\leq 2|e^{it}-z|,
\end{equation}
valid for each $|z|<1,$ we have 
\begin{equation}
\frac{(1-|z|)^{1-\beta}}{1-\beta}\left|\frac{\partial F(z)}{\partial z}\right|\leq
\end{equation}
\begin{equation}
\leq \frac{1}{2\pi}\int_{-\pi}^{\pi}\frac{(1-|z|)^{1-\beta}}{|z-e^{it}|^{2-\beta}}|g(t)|dt\leq
\end{equation}
\begin{equation}
\leq \frac{1}{2\pi}\int_{|e^{it}-z/|z||\leq y}\frac{|g(t)|}{(1-|z|)}dt+\frac{1}{2\pi}\int_{|e^{it}-z/|z||>y}\frac{(1-|z|)^{1-\beta}}{|e^{it}-z/|z||^{2-\beta}}|g(t)|dt.
\end{equation}
If $z,\quad |e^{ix}-z|\leq 2(1-|z|)$, then putting $y=(1-|z|)$ we get
\begin{equation}
(1-|z|)^{1-\beta}\left|\frac{\partial F(z)}{\partial z}\right|\leq
\end{equation}
\begin{equation}
\leq \frac{1}{2\pi}\int_{|t-x|\leq (1-|z|)}\frac{|g(t)|}{(1-|z|)}dt+(1-|z|)^{1-\beta}\int_{1-|z|}^2\frac{1}{y^{2-\beta}}d\left(\int_{x-y}^{x+y}|g(t)|dt\right)\leq
\end{equation}
\begin{equation}
\leq g(t)^*+(1-|z|)^{1-\beta}\int_{1-|z|}^2\frac{1}{y^{3-\beta}}\left(\int_{x-y}^{x+y}|g(t)|dt\right)dy.
\end{equation}
Consequently,
\begin{equation}
(1-|z|)^{1-\beta}\left|\frac{\partial F(z)}{\partial z}\right|\leq g^*(x).
\end{equation}
	Let $l$ be the linear interval with the end points $z_1, z_2\in\{z,\quad |e^{ix}-z|\leq 2(1-|z|)\}$. 
We have
\begin{equation}
|F(z_1)-F(z_2)|\leq \left|\int_{l}\frac{\partial F(z)}{\partial z}dz\right|\leq
\end{equation}
\begin{equation}
\leq g^*(x)\int_{l}\frac{|dz|}{(1-|z|)^{1-\alpha+\beta}}\leq g^*(x)(|z_1-z_2|^{\alpha-\beta}.
\end{equation}

	To complete the prove it is sufficient to note, that 
\[
C_{\alpha-\beta}(F)=0,
\]
where $F=\{x;\quad g^*(x)=\infty\}$.
\end{proof}
	The proof of the following lemma may be found in \cite{7}.
\begin{lemma} 
 Let $g(t),\quad 0<t,$ be a positive and no decreasing function. Let $u(z)$ be a non-negative 
harmonic function defined on the unit disk. Then, the subset
\begin{equation}
F=\left \{ \xi; |\xi|=1,\quad 
\sup_{z\in \Delta(\xi)}\frac{u(z)}{g(1-\vert z\vert)}=\infty\right\}
\end{equation}
has zero Hausdorff's measure, i.e.
\begin{equation}
M_{h(t)}(F)=0,
\end{equation}
where $h(t)=tg(t).$
\end{lemma}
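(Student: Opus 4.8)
The plan is to combine the Herglotz representation of a non-negative harmonic function with a dyadic estimate of the Poisson integral inside a Stolz region and a Vitali covering argument. Write
\[
u(z)=\frac1{2\pi}\int_{-\pi}^{\pi}\frac{1-|z|^{2}}{|z-e^{it}|^{2}}\,d\mu(t)
\]
for a finite non-negative Borel measure $\mu$ on the circle. Fix $\xi=e^{i\theta_{0}}\in F$ and $z\in\Delta(\xi)$, and put $\delta=1-|z|$. Since $z$ lies in a cone at $\xi$ one has $|z-e^{it}|\asymp\delta+|t-\theta_{0}|$, so the Poisson kernel at $z$ is $\asymp\delta(\delta^{2}+|t-\theta_{0}|^{2})^{-1}$; splitting the integral over the concentric arcs $I_{k}=\{e^{it}:|t-\theta_{0}|<2^{k}\delta\}$, $k\ge 0$, one gets
\[
u(z)\le\frac{C}{\delta}\sum_{k\ge 0}4^{-k}\,\mu(I_{k}).
\]
Because the ratio $u(z)/g(1-|z|)$ is bounded on every set $\{|z|\le\rho\}$ with $\rho<1$, membership $\xi\in F$ forces that for each $N$ there is a $z\in\Delta(\xi)$ with $\delta$ arbitrarily small and $u(z)>Ng(\delta)$. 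Comparing this with the last display and writing $h(\delta)=\delta g(\delta)$, the geometric series must contain a large term: there is an index $k=k(\xi)$ with $\mu(I_{k(\xi)})\ge cN\,2^{k(\xi)}h(\delta)$. Since $|I_{k}|=2^{k+1}\delta$ and $g$ is monotone, $h(|I_{k(\xi)}|)=|I_{k(\xi)}|\,g(|I_{k(\xi)}|)\le 2^{k(\xi)+1}\delta\,g(\delta)=2^{k(\xi)+1}h(\delta)$, whence
\[
\mu(I_{k(\xi)})\ \ge\ c'N\,h(|I_{k(\xi)}|),\qquad \xi\in I_{k(\xi)} .
\]

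Next I would run a covering argument. The arcs $\{I_{k(\xi)}\}_{\xi\in F}$ cover $F$, so by the Vitali covering lemma there is a countable pairwise disjoint subfamily $\{I_{j}\}$ with $F\subseteq\bigcup_{j}5I_{j}$. Using $h(5t)=5t\,g(5t)\le 5t\,g(t)=5h(t)$ together with the inequality above,
\[
M_{h}(F)\le\sum_{j}h(5|I_{j}|)\le 5\sum_{j}h(|I_{j}|)\le\frac{5}{c'N}\sum_{j}\mu(I_{j})\le\frac{5\,\|\mu\|}{c'N},
\]
and letting $N\to\infty$ gives $M_{h}(F)=0$.

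The step I expect to be the main obstacle is producing the single-arc inequality $\mu(I)\gtrsim N\,h(|I|)$ with $\xi\in I$: the naive estimate of $u(z)$ by the mass of the one arc of length $\asymp 1-|z|$ is false, since the Poisson kernel decays only quadratically, so the whole dyadic tail $\sum 4^{-k}\mu(I_{k})$ must be retained, a favourable scale extracted, and the gauge $h$ controlled at that scale -- which is exactly where the monotonicity of $g$ (equivalently, the bound $h(\lambda t)\le\lambda h(t)$ for $\lambda\ge 1$ for $h(t)=tg(t)$) is used, and where one must check that the constant produced does not deteriorate with the scale. In the degenerate case where $g$ is bounded near the origin one has $h(t)\le Ct$ there, $M_{h}$ is dominated by Lebesgue measure, and the statement is either immediate or reduces to the classical fact that the non-tangential maximal function of a non-negative harmonic function is finite almost everywhere, via the weak type $(1,1)$ bound for the Hardy--Littlewood maximal function of $\mu$.
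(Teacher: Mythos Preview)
The paper does not prove this lemma in the text; it simply states it and refers the reader to \cite{7}. So there is no in-paper argument to compare yours against.

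Your argument is the standard one and is correct. The Herglotz representation, the dyadic bound for the Poisson kernel inside a Stolz cone, the pigeonhole step extracting a scale $k$ with $\mu(I_{k})\ge c\,2^{k}Nh(\delta)$ (suppose the contrary and sum the geometric series $\sum 2^{-k}$), the passage from $2^{k}h(\delta)$ to $h(|I_{k}|)$ via $h(\lambda t)\le\lambda h(t)$ for $\lambda\ge 1$, and the $5r$-Vitali selection all fit together exactly as you outline. Two small remarks. First, the paper's $M_{h}$ is defined as Hausdorff \emph{content} (no diameter restriction on the covering arcs), so you do not even need the observation that the arcs $I_{k(\xi)}$ can be taken short; any cover suffices for the final estimate. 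Second, your crucial inequality $h(\lambda t)\le\lambda h(t)$ is equivalent to $g$ being \emph{non-increasing}; the paper's phrase ``no decreasing'' is almost certainly a translation slip for this, since under the literal reading (non-decreasing) $g$ would be bounded near $0$, $h(t)\le Ct$, and the lemma would collapse to the classical a.e.\ finiteness of the non-tangential maximal function --- precisely the degenerate case you already isolate in your last paragraph. Under the intended reading your proof goes through without change.
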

\section{Uniqueness theorem for analytic functions}

	The main result of this section is the following theorem.
\begin{theorem}
 Let $\{\xi_n\}$ be a sequence in the unit disk with
\begin{equation}
\lim_{n\to \infty}\vert \xi_n \vert=1.
\end{equation}
Let $E$ be a subset of unit circle such that for some continuous function $0\leq h(t),\quad 0<t,$ 
satisfying the condition 
\begin{equation}
\lim_{t\to 0+}\frac{h(t)}{t\log \frac{1}{t}}=0
\end{equation}
we have $M_{h}(E)>0$.

	Let for each point $y\in E$ there is a subsequence $\{\xi_{n_k}\}$ such that
\begin{equation}
\left\vert y-\frac{\xi_{n_k}}{\xi_{n_k}} \right\vert<2(1-\vert \xi_{n_k}\vert), \quad k=1,2,...
\end{equation}
	Let $0\leq\alpha<1$ be a fixed number and $f(z)$ is an analytic function with
\begin{equation}
\int_0^1\int_{-\pi}^{\pi}|f'(z)|^2(1-|z|)^{\alpha}dxdy<\infty,
\end{equation}
and
\begin{equation}
\lim_{n\to\infty}f(\xi_n)=0 \quad (1)
\end{equation}
then $f(z)\equiv 0$.
\end{theorem}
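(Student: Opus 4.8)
The plan is to use the weighted Dirichlet hypothesis only to pin down the boundary regularity of $f$, to deduce from it (together with $(1)$) that $f$ decays logarithmically along nontangential approach at a set of boundary points of positive $M_h$-measure, and then to contradict this unless $f\equiv 0$ by means of the canonical factorization and the lemma on nonnegative harmonic functions.

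First I would record the regularity. Writing $f(z)=\sum_{n\ge 0}a_nz^n$ and evaluating the area integral in polar coordinates with the Beta integral, the condition $\int_0^1\!\int_{-\pi}^{\pi}|f'(z)|^2(1-|z|)^{\alpha}\,dx\,dy<\infty$ is equivalent to $\sum_{n}|a_n|^2(1+n)^{1-\alpha}<\infty$, i.e.\ to $f^*\in B_2^{\gamma}$ with $\gamma:=\frac{1-\alpha}{2}\in(0,\tfrac12]$, where $f^*$ is the boundary function of $f$; since $1-\alpha>0$ this also gives $f\in H^2\subset N^+$, and $f$ is the Poisson extension of $f^*$. Fix $\beta\in(0,\gamma)$. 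The lemma on Hölder continuity of $B_2^{\gamma}$-functions (applied with parameter $\alpha=\gamma$ and increment $\beta$) produces a set $F_0\subseteq\{|z|=1\}$ with $C_{\gamma-\beta}(F_0)=0$ such that for each $e^{ix}\notin F_0$ there is $A(x)$ with $|f(z_1)-f(z_2)|<A(x)|z_1-z_2|^{\gamma-\beta}$ for $z_1,z_2$ in a fixed Stolz angle at $e^{ix}$. We may assume $h$ nondecreasing (replace it by $t\mapsto\sup_{0<s\le t}h(s)$: this is still $o(t\log\tfrac1t)$ since $t\log\tfrac1t$ increases near $0$, and it does not decrease $M_h$). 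As $h(t)=o(t\log\tfrac1t)=o(t^{1-\varepsilon})$ for all $\varepsilon>0$, choosing $\varepsilon<1-(\gamma-\beta)$ and integrating by parts gives $\int_0 dh(r)/r^{\gamma-\beta}<\infty$; the lemma connecting $\alpha$-capacity with $M_h$-measure then gives $M_h(F_0)=0$, so by subadditivity $M_h(E\setminus F_0)=M_h(E)>0$.

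Next I would extract the decay. Fix $y=e^{ix}\in E\setminus F_0$ with its subsequence $\{\xi_{n_k}\}$; since $\bigl|y-\xi_{n_k}/|\xi_{n_k}|\bigr|<2(1-|\xi_{n_k}|)$ and $|\xi_{n_k}|\to1$, the $\xi_{n_k}$ lie in a fixed Stolz angle $\Delta(y)$ at $y$ (widening the aperture absorbs the difference between $|y-\xi_{n_k}|$ and $|y-\xi_{n_k}/|\xi_{n_k}||$). For $z\in\Delta(y)$, picking $k$ with $1-|\xi_{n_k}|\le 1-|z|$ gives $|f(z)|\le|f(\xi_{n_k})|+A(x)|z-\xi_{n_k}|^{\gamma-\beta}\le|f(\xi_{n_k})|+C\,A(x)(1-|z|)^{\gamma-\beta}$; letting $k\to\infty$ and using $(1)$ yields $|f(z)|\le C\,A(x)(1-|z|)^{\gamma-\beta}$, so $f$ has nontangential limit $0$ at $y$ and
\begin{equation}\label{eq:decay}
-\log|f(z)|\ \ge\ (\gamma-\beta)\log\frac{1}{1-|z|}-c(x),\qquad z\in\Delta(y).
\end{equation}

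Finally, suppose $f\not\equiv 0$ and factor $f=\lambda\,B\,S_\mu\,O$ in $N^+$ ($B$ Blaschke, $S_\mu$ singular inner, $O$ outer). Then $-\log|f(z)|=G_B(z)+U_\mu(z)+U_1(z)-U_2(z)$, where $G_B=-\log|B|\ge0$ is the Green potential of the zeros of $f$, $U_\mu=-\log|S_\mu|=\int P_z\,d\mu\ge0$ is harmonic, and $U_1,U_2\ge0$ are the positive harmonic functions with boundary data $(\log|f^*|)^{-}$ and $(\log|f^*|)^{+}$ (both in $L^1$ since $f\in N^+$, $f\not\equiv0$). Hence $G_B+U_\mu+U_1\ge-\log|f|$, so by \eqref{eq:decay} at each $y\in E\setminus F_0$ at least one of $G_B,U_\mu,U_1$ has nontangential $\limsup$ of $(\,\cdot\,)(z)\big/\log\frac{1}{1-|z|}$ at least $\frac{\gamma-\beta}{3}>0$; let $F_B,F_\mu,F_1$ be the corresponding subsets of $E\setminus F_0$, so $E\setminus F_0\subseteq F_B\cup F_\mu\cup F_1$. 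For $F_\mu$ (and identically $F_1$) I would apply the lemma on nonnegative harmonic functions to $U_\mu$ with a gauge $t\mapsto tg(t)$, where $g(t)=o(\log\tfrac1t)$ and $tg(t)\ge h(t)$ (possible since $h(t)/t=o(\log\tfrac1t)$): on $F_\mu$ one has $U_\mu(z_j)/g(1-|z_j|)\ge\frac{\gamma-\beta}{3}\,\log\tfrac{1}{1-|z_j|}\big/g(1-|z_j|)\to\infty$ along a nontangential sequence, so $F_\mu$ sits inside the exceptional set of that lemma, $M_{tg}(F_\mu)=0$, and hence $M_h(F_\mu)=0$. For $F_B$ one bounds each term $G(z,a_k)=-\log|b_{a_k}(z)|$ by a local contribution near the radial projection $a_k/|a_k|$ plus a Poisson-type term, reducing to the same harmonic-function lemma applied to $\int P_z\,d\sigma_B$ with $\sigma_B:=\sum_k(1-|a_k|)\delta_{a_k/|a_k|}$ (finite by the Blaschke condition), to get $M_h(F_B)=0$. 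Then $M_h(E\setminus F_0)=0$, contradicting $M_h(E)>0$; therefore $f\equiv0$. The main obstacle is precisely this last point: the lemma on nonnegative harmonic functions does not apply verbatim to the superharmonic function $-\log|B|$, so one needs the parallel estimate for Green potentials of Blaschke sequences — controlling how fast $-\log|B|$ can grow along nontangential approach off an $M_h$-null set.
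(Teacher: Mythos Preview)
Your overall architecture coincides with the paper's: extract $B_2^{\gamma}$ regularity (with $\gamma=\tfrac{1-\alpha}{2}$) from the weighted Dirichlet integral, use the H\"older lemma to get power decay $|f(z)|\lesssim (1-|z|)^{\gamma-\beta}$ in Stolz angles off an $M_h$-null set, factor $f$, and control each factor via the lemma on nonnegative harmonic functions with a gauge $g(t)=o(\log\tfrac1t)$ satisfying $tg(t)\ge h(t)$. Your treatment of the singular-inner and outer parts through the nonnegative harmonic functions $U_\mu,U_1$ is in fact tidier than the paper's, which writes $f=B\cdot F$ with ``$F\in H^{\infty}$'' and leaves the corresponding exceptional set $F_4$ unexplained.

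Where the paper adds substance is exactly the point you flag as the ``main obstacle.'' Your suggestion---dominate $-\log|b_{a_k}(z)|$ by a Poisson term $P_z(a_k/|a_k|)\,(1-|a_k|)$ plus a local contribution---is the paper's route: the paper's auxiliary function $v(z)$ is essentially your $\int P_z\,d\sigma_B$, and once $-\log|B|\le C\,v$ holds in a Stolz angle, Lemma~4 disposes of $v$. The issue is that the local contributions are \emph{not} free. The paper first removes
\[
F_2=\Bigl\{\,w\in\partial\mathbb{D}:\ w\in C(a_k)\ \text{for infinitely many }k\,\Bigr\},\qquad C(a)=\{\,|w|=1:\ |w-a|\le 2(1-|a|)\,\},
\]
the set of boundary points at which infinitely many zeros approach nontangentially, and shows $M_{t\log(1/t)}(F_2)=0$. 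This does \emph{not} follow from the Blaschke condition alone (that gives only $|F_2|=0$); the paper invokes the sharper zero-set estimate for the weighted Dirichlet class from \cite{6},
\[
\int_{-\pi}^{\pi}\log\rho_{(1+\alpha)/2}(e^{ix},\Lambda)\,dx>-\infty,\qquad \rho_\sigma(\xi,\Lambda)=\inf_{a\in\Lambda}\frac{|\xi-a|}{(1-|a|)^{\sigma}},
\]
and combines it with an Ahlfors bounded-multiplicity subcover of any compact $F\subset F_2$ by the arcs $C(a_k)$, $k\ge N$, to bound $\sum_j|C(a_{k_j})|\log\tfrac{1}{|C(a_{k_j})|}$ by a constant times $\int_{\bigcup_{k\ge N}C(a_k)}\log^{+}\!\bigl(1/\rho_{(1+\alpha)/2}\bigr)$, which tends to $0$ as $N\to\infty$. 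Off $F_2$ only finitely many zeros lie in each Stolz angle (these are harmless for the $\limsup$ at the boundary), and for the remaining zeros the Poisson comparison $-\log|b_{a_k}(z)|\lesssim P_z(a_k/|a_k|)\,(1-|a_k|)$ is valid, yielding $-\log|B|\le C\,v$ there. This is the concrete ingredient your sketch is missing.
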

\begin{proof}
 At first let us note, that instead of (1) we can assume
\begin{equation}
\lim_{n\to\infty}\frac{f(\xi_n)}{(1-|\xi_n|)^{\alpha-\beta}}=0,
\end{equation}
where $0<\beta<\alpha$ is a constant. Indeed, By lemma 3 there is a subset $F_1$, for which
\begin{equation}
C_{\alpha-\beta}(F_1)=0,
\end{equation}
end for each point $x\notin F_1$ there is a number $A(x)<\infty$ such that for arbitrary points $z_1,z_2$ 
from the unite disk, satisfying the condition
\begin{equation}
|e^{ix}-z_j|<2(1-|z_j|),\quad j=1,2,
\end{equation}
we have
\begin{equation}
|f(z_1)-f(z_2)|<A(x)|z_1-z_2|^{\alpha-\beta}.
\end{equation}
In particularly, if $x\in E\setminus F_1$ then
\begin{equation}
|f(\xi_n)|<A(x)(1-|\xi_n|)^{\alpha-\beta}
\end{equation}
for each point $\xi_n$ satisfying the condition
\begin{equation}
|e^{ix}-\xi_n|<2(1-|\xi_n|).
\end{equation}
Thanks to lemma 1 we have 
\begin{equation}
M_{h}(F_1)=0
\end{equation}
and so, 
\begin{equation}
M_{h}(E\setminus F_1)>0.
\end{equation}

	For each point $z,\quad |z|<1,$ let us denote
\begin{equation}
C(z)=\{w;\quad |w|=1,\quad |w-z|\leq 2(1-|z|)\}.
\end{equation}

	For the zeros $\Lambda=\{z_k\}_{k=1}^{\infty}$ of our function $f(z)$ we have
\begin{equation}
\sum_{k=1}^{\infty}(1-|z_k|)<\infty
\end{equation}
and, see \cite{6}, we have also 
\begin{equation}
\frac{1}{2\pi}\int_{-\pi}^{\pi}\log\rho_{\frac{1+\alpha}{2}}(e^{ix},\Lambda)dx>-\infty,
\end{equation}
where 
\begin{equation}
\rho_{\sigma}(\xi,\Lambda)=\inf_{z\in \Lambda}\frac{|\xi-z|}{(1-|z|)^{\sigma}}.
\end{equation}
	Let us consider a new function
\begin{equation}
G(w)=\sum_{k=1}^{\infty}\chi_k(w),\quad w\in \partial D,
\end{equation}
where $\chi_k(w)$ is the characteristic function of the arc $C(z_k)$.

	We want to prove that the subset
\begin{equation}
F_2=\{w;\quad w\in \partial D,\quad G(w)=+\infty\}
\end{equation}
satisfies the condition
\begin{equation}
M_{t\log1/t}(F_2)=0.		\quad(2)
\end{equation}

	Let us suppose $M_{t\log1/t}(F_2)>0.$ Then, see \cite{MR0225986}, p. 18, there is a compact subset 
$F\subset F_2$ for which 
\begin{equation}
M_{t\log 1/t}(F)>0.
\end{equation}
For each natural $N$, the family of subsets $C(z_k),\quad k=N, N+1, \dots$ cover $F$. 
By Alfor's theorem, see \cite{MR0225986}, from that family of arcs we can choose a finite number, 
which cover $F$ and have a finite multiplicity less an absolute constant $A$. Let 
\begin{equation}
C(z_{k_1}),\dots, C(z_{k_{m}})
\end{equation}
be the constructed subfamily, which cover the set $F$. 

	We can write
\begin{equation}
\left(1-\frac{1+\alpha}{2\sigma}\right)\sum_{j=1}^{m}|C(z_{n_j})|\log \frac{e}{|C(z_{n_j})|}\leq 
\end{equation}
\begin{equation}
\leq \left(1-\frac{1+\alpha}{2\sigma}\right)\sum_{j=1}^{m_1}\int_{C(z_{n_j})}\log^+\left(\frac{1}{|w-z_{n_j}|}\right)|dw|\leq
\end{equation}
\begin{equation}
\leq \sum_{j=1}^{m_1}\int_{C(z_{n_j})}\log^+\left(\frac{(1-|z_{n_j}|)^{(1+\alpha)/2}}{|w-z_{n_j}|}\right)|dw|\leq
\end{equation}
\begin{equation}
\leq A\int_{Q_N}\log^+\left(\sup_k\frac{(1-|z_{k}|)^{(1+\alpha)/2}}{|w-z_{k}|}\right)|dw|,
\end{equation}
where 
\begin{equation}
Q_N=\bigcup_{j=N}^{\infty}C(z_j).
\end{equation}
Letting $N$ to go infinity we get 
\begin{equation}
M_{t\log1/t}(F)=0.
\end{equation}
The received contradiction proves (2). Consequently,
\begin{equation}
M_{h}(F_2)=0.
\end{equation}
	Thus, for each point $e^{ix}\in E\setminus F_2$ in domain
\begin{equation}
\{z;\quad |e^{ix}-z|<2(1-|z|)\}
\end{equation}
there are only finite number zeros of the function $f(z)$.

	By F. Riesz theorem we have the representation
\[
f(z)=B(z)F(z),
\]
where $B(z)$ is the Blashke product constructed by zeros $\{w_n\}$ 
of the function $f(z)$ and $F(z)\in H^{\infty}$, which has no zeros in the unit disk. 

	Let us denote 
\begin{equation}
v(z)=\sum_{n=1}^{\infty}\frac{1-|z|^2}{\left|z-\frac{w_n}{|w_n|}\right|}(1-|w_n|),\quad |z|<1.
\end{equation}
For two arbitrary points $z,\quad w$ from the unit disk we have 
\begin{equation}
-\frac{(1-|z|^2)(1-|w|^2)}{|z-w|^2}\leq -\log\left(1+\frac{(1-|z|^2)(1-|w|^2)}{|z-w|^2}\right)=
\log\left|\frac{w-z}{1-\bar{w}z}\right|^2.
\end{equation}
Let $y$ be a point on the unit circle and $n_0$ be a natural number. 
Let for each index $n=n_0, n_0+1, \dots$ the inequalities
\begin{equation}
\left|y-\frac{w_n}{|w_n|}\right|\geq 4(1-|w_n|),\quad n=1, 2, \dots
\end{equation}
hold. Then for arbitrary point $z,\quad |z|<1,$ satisfying the condition 
\begin{equation}
\left|y-\frac{z}{|z|}\right|\leq 2(1-|z|).
\end{equation}
there is a constant $C>0$ such that
\begin{equation}
-Cv(z)\leq \log |B(z)|.
\end{equation}
	Thanks to the lemma 4, applied to the function $v(z)\geq 0$, 
we get that there is a subset $F_3\in \partial D$ for which 
\begin{equation}
M_{h(t)}(F_3)=0,
\end{equation}
where $h(t)=tg(t)$ and for each point $y\in \partial D\setminus F_3$ we have
\begin{equation}
\sup\left\{\frac{v(z)}{g(1-|z|)};|z|<1,\quad \left|y-\frac{z}{|z|}\right|\leq 2(1-|z|)\right\}<+\infty.
\end{equation}

	Consequently, there is a subset $F_4$ such that 
\begin{equation}
M_{h(t)}(F_4)=0
\end{equation}
and for each point $y\notin E\setminus F_4$ we have
\begin{equation}
\sup\left\{\frac{|f(z)|}{(1-|z|)^{1-\alpha}};|z|<1,\quad \left|y-\frac{z}{|z|}\right|\leq 2(1-|z|)\right\}<\infty.
\end{equation}
	These remarks contradict theorem's conditions since 
$$E\setminus (F_1\cup F_2\cup F_3\cup F_4)\neq \emptyset.$$
\end{proof}

\section{De-noising problem for analytic functions}

    In this section we consider the following problem: let $f(z)$ be a bounded analytic function
and $\{z_n\}_{n=1}^{\infty}$ be a sequence from the unit disk. Let we can calculate empirically 
the values of this function at the points $\{z_n\}_{n=1}^{\infty}$ with some error, i.e.
\begin{equation}
w_n=f(z_n)+\epsilon_n,\quad n=1,\dots, 
\end{equation}
where $\epsilon_n,\quad n=1,\dots$ is a sequence of independent random quantities with the same 
distribution and with the mean value equal zero. The following problem naturally arises: is it 
possible to choose the points $\{z_n\}_{n=1}^{\infty}$ in such a way that by observed quantities 
$w_n,\quad n=1, 2, \dots$ it will be possible to restore the function $f(z)$ by probability one? 

	The relation of this problem with the identification problem for linear bounded systems one can 
find in \cite{9}.

	Here we give some classical results of Shizuo Kakutani, see \cite{MR0023331}, which play a 
principal role in answer to this question.

	Let $\Omega$ be an arbitrary set and let $\sigma$ be a $\sigma-$field of subsets of 
$\Omega$. Let $\Re(\sigma)$ be the
family of all countable additive measures $\mu(d\omega)$ defined on $\sigma$ for which 
$\mu(\Omega)=1$. 
\begin{definition}
Two measures $\mu,\nu \in \Re(\sigma)$  called orthogonal (notation $\mu\perp \nu$) if 
there are disjoint subsets $B,B'\in \sigma$ such that 
\begin{equation}
\mu(B)=\nu(B')=1.
\end{equation}
\end{definition}
    Let $\mu, \nu \in \Re(\sigma)$ be measures on $(\Omega,\sigma)$. For arbitrary measure
$\tau\in \Re(\sigma)$ such that $\mu$ and $\nu$ are absolutely continuous in respect to $\tau$, 
let us denote
\begin{equation}
\rho(\mu,\nu)=\int_{\Omega}\sqrt{\frac{\mu(d\omega)}{\tau(d\omega)}}\sqrt{\frac{\nu(d\omega)}{\tau(d\omega)}}\tau(d\omega).
\end{equation}
This integral doesn't depend upon the choice of the measure $\tau$. That is way the following 
E.Hellinger's notation 
\begin{equation}
\rho(\mu,\nu)=\int_{\Omega}\sqrt{\mu(d\omega) \nu(d\omega)}
\end{equation}
is natural.

    Let $\{\mu_n\}_{n=1}^{\infty}$ and $\{\nu_n\}_{n=1}^{\infty}$ be two family of
probability measures on ${\bf C}$. Let us denote by $\mu=\mu_1\times\mu_2\times\dots,\quad \nu=\nu_1\times\nu_2\times\dots.$ the infinite direct products. 

    It is easy to see that if for some $k_0$ we have $\mu_{k_0}\bot\nu_{k_0}$ then $\mu\bot\nu$.
The case $\mu_k\sim\nu_k,\quad k=1,\dots$ was considered in \cite{MR0023331}.
\begin{theorem}
Let $\mu_k\sim\nu_k$ for all $k=1,\dots$. Then the measures $\mu$
and $\nu$ are equivalent if and only if $$ \prod_{k=1}^{\infty}
\rho(\mu_k,\nu_k)>0. $$ Otherwise those measures are orthogonal,
i.e. $\mu\perp\nu$. 
\end{theorem}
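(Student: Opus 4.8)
The plan is to reduce the statement to the almost‑sure and $L^1$ convergence of a suitable nonnegative martingale. Work on $\Omega=\prod_{k}\mathbf{C}$ with the product $\sigma$‑field, let $\mathcal F_n$ be the $\sigma$‑field generated by the first $n$ coordinates, and $\mathcal F=\sigma\big(\bigcup_n\mathcal F_n\big)$. Since $\mu_k\sim\nu_k$, the derivative $r_k(\omega):=\dfrac{d\nu_k}{d\mu_k}(\omega_k)$ is well defined and strictly positive off a $\mu_k$‑null (equivalently $\nu_k$‑null) set. Put $M_n=\prod_{k=1}^n r_k$. Using $\mathbb E_{\mu_{n+1}}[r_{n+1}]=1$ one checks that $(M_n,\mathcal F_n)$ is a nonnegative martingale under $\mu$ with $\mathbb E_\mu[M_n]=1$; restricted to $\mathcal F_n$ it is exactly the density of $\nu_1\times\cdots\times\nu_n$ with respect to $\mu_1\times\cdots\times\mu_n$. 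By the martingale convergence theorem $M_n\to M_\infty$ $\mu$‑a.s., with $M_\infty<\infty$ a.s.

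The Hellinger products enter through the square‑root martingale. Set $N_n=\sqrt{M_n}=\prod_{k=1}^n\sqrt{r_k}$; by independence under $\mu$, $\mathbb E_\mu[N_n]=\prod_{k=1}^n\rho(\mu_k,\nu_k)=:a_n$, a non‑increasing sequence with limit $a:=\prod_{k=1}^\infty\rho(\mu_k,\nu_k)$, while $\mathbb E_\mu[N_n^2]=\mathbb E_\mu[M_n]=1$. For $n\ge m$ one computes $\mathbb E_\mu[N_nN_m]=\mathbb E_\mu[M_m]\prod_{k=m+1}^n\rho(\mu_k,\nu_k)=a_n/a_m$, whence
\begin{equation}
\mathbb E_\mu\big[(N_n-N_m)^2\big]=2-2\,\frac{a_n}{a_m},\qquad n\ge m .
\end{equation}
If $a>0$, the right side tends to $0$ as $m,n\to\infty$, so $(N_n)$ is Cauchy in $L^2(\mu)$; combined with $N_n\to\sqrt{M_\infty}$ a.s. this gives $N_n\to\sqrt{M_\infty}$ in $L^2(\mu)$, hence $M_n\to M_\infty$ in $L^1(\mu)$ and $\mathbb E_\mu[M_\infty]=1$. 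Consequently $A\mapsto\int_A M_\infty\,d\mu$ is a probability measure agreeing with $\nu$ on each cylinder (because $M_n=\mathbb E_\mu[M_\infty\mid\mathcal F_n]$ is the density of $\nu$ on $\mathcal F_n$), hence agreeing with $\nu$ on all of $\mathcal F$; thus $\nu\ll\mu$. Exchanging the roles of $\mu$ and $\nu$ — legitimate since $\rho$ is symmetric and $\mu_k\sim\nu_k$ — gives $\mu\ll\nu$, so $\mu\sim\nu$.

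If instead $a=0$, Fatou's lemma applied to $N_n\to\sqrt{M_\infty}$ gives $\mathbb E_\mu[\sqrt{M_\infty}]\le\liminf_n a_n=0$, so $M_\infty=0$ $\mu$‑a.s. Consider also $M_n'=\prod_{k=1}^n\dfrac{d\mu_k}{d\nu_k}$, a nonnegative martingale under $\nu$ (since $\mathbb E_{\nu_k}[d\mu_k/d\nu_k]=1$), so $M_n'\to M_\infty'<\infty$ $\nu$‑a.s. Fixing a common reference $\lambda_k=(\mu_k+\nu_k)/2$, let $A$ be the coordinate set on which every $r_k$ and every $1/r_k$ is finite and positive; then $\mu(A)=\nu(A)=1$ and $M_nM_n'=1$ on $A$, so on $A\cap\{M_\infty=0\}$ one has $M_n'\to\infty$. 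Hence $B:=A\cap\{M_\infty=0\}$ and $B':=A\cap\{M_\infty'<\infty\}$ are disjoint with $\mu(B)=1=\nu(B')$, i.e. $\mu\perp\nu$. Since the cases $a>0$ and $a=0$ are exhaustive and the two conclusions are mutually exclusive on a probability space, the stated equivalence and the dichotomy follow.

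The main obstacle I anticipate is the passage from almost‑sure to $L^1$ convergence of $M_n$ in the case $a>0$: a nonnegative martingale need not be uniformly integrable, and it is precisely the $L^2$‑boundedness of the square‑root martingale $N_n$ — the identity $\mathbb E_\mu[(N_n-N_m)^2]=2-2a_n/a_m$ — that rescues the argument. A secondary point needing care is purely measure‑theoretic: choosing the common references $\lambda_k=(\mu_k+\nu_k)/2$ so that the derivatives $r_k$, their reciprocals, and the finite‑stage product densities all live on one honest full‑measure coordinate set, which is what makes the reciprocal relation $M_nM_n'=1$ and the extension from cylinders to $\mathcal F$ rigorous.
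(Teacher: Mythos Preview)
The paper does not prove this theorem: it is quoted as a classical result of Kakutani with a reference to \cite{MR0023331}, and no argument is given. So there is nothing in the paper to compare your proof against.

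Your martingale argument is correct and is the standard modern route to Kakutani's dichotomy. The key identity $\mathbb E_\mu[(N_n-N_m)^2]=2-2a_n/a_m$ is right (via $N_nN_m=M_m\prod_{k=m+1}^n\sqrt{r_k}$ and independence), and it gives uniform integrability of $M_n$ exactly when $a>0$, which is the heart of the matter. The only cosmetic point is in the orthogonality step: rather than writing $B'=\{M_\infty'<\infty\}$, it is cleaner to take $B=A\cap\{\omega:M_n(\omega)\to 0\}$ and $B'=A\cap\{\omega:M_n'(\omega)\text{ converges in }[0,\infty)\}$ as honest measurable sets; then $M_nM_n'=1$ on $A$ forces $B\cap B'=\emptyset$, while $\mu(B)=\nu(B')=1$ as you say.

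For context, Kakutani's original 1948 proof predates the systematic use of martingale convergence and proceeds more directly with the Hellinger integral and a zero--one type argument on the tail; your approach is the streamlined post--Doob version. Either way, the paper only needs the statement, not the proof.
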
 
    Here, we need only the following particular case of S. Kakutani's Theorem.
    Let $f(z),g(z)\quad |z|<1,$ be bounded analytic functions and $z_k,\quad k=1,2,\dots$ be points 
in the unite disk. Let 
\begin{equation}
d\mu_k(z)=P(z-f(z_k))dxdy,\quad z=x+iy
\end{equation}
and 
\begin{equation}
d\nu_k(z)=P(z-g(z_k))dxdy,
\end{equation} 
where $P(z)\geq 0$ and
\begin{equation}
\int_{-\infty}^{\infty}\int_{-\infty}^{\infty}P(z)dxdy=1.
\end{equation}
	We have
\begin{equation}
\rho(d\mu_k,d\nu_k)=\int_{-\infty}^{\infty}\int_{-\infty}^{\infty}\sqrt{P(z-f(z_k))P(z-g(z_k))}dxdy.
\end{equation}
and
\begin{equation}
1-\rho(d\mu_k,d\nu_k)=
\end{equation}
\begin{equation}
=\frac{1}{2}\int_{-\infty}^{\infty}\int_{-\infty}^{\infty}\left(\sqrt{P(z-f(z_k))}-\sqrt{P(z-g(z_k))}\right)^2dxdy\geq A|f(z_k)-g(z_k)|^2
\end{equation}
for some number $A>0$.

	The corresponding infinite products of measures are orthogonal if 
\begin{equation}
\sum_{k=1}^{\infty}|f(z_k)-g(z_k)|^2=\infty.
\end{equation}
	So, if we have
\begin{equation}
w_n=H(z_n)+\epsilon_n,\quad n=1,\dots,
\end{equation}
then by probability one, it is possible to identify $H(z)$ with some $f(z)$, if
the points $\{z_n\}_{n=1}^{\infty}$ are possible to choose in such
a way, that from the condition $f(z),g(z)\in H^{\infty}$ and
\begin{equation}
\sum_{k=1}^{\infty}|f(z_k)-g(z_k)|^2<\infty
\end{equation} 
it follows $f(z)\equiv g(z)$. 
This note permits us to formulate the following result.
\begin{theorem}
 Let $0\leq\alpha<1$ be a fixed number. Let $\{z_n\}$ be a sequence in the unit disk with 
\begin{equation}
\lim_{n\to \infty}\vert z_n \vert=1.
\end{equation}
Let $E$ be a subset of unit circle such that for some continuous function $0\leq h(t),\quad 0<t,$ 
satisfying the condition 
\begin{equation}
\lim_{t\to 0+}\frac{h(t)}{t\log\frac{1}{t}}=0
\end{equation}
we have 
\begin{equation}
M_{h}(E)>0.
\end{equation}

	Let for each point $y\in E$ there is a subsequence $\{z_{n_k}\}$ such that
\begin{equation}
\left\vert y-\frac{z_{n_k}}{z_{n_k}} \right\vert<2(1-\vert z_{n_k}\vert), \quad k=1,2,...
\end{equation}
	Let we can observe the quantities
\begin{equation}
X_n=S(z_n)+\xi_n,\quad n=1, 2, \dots
\end{equation}
where $\xi_n,\quad n=1, 2, \dots$ are independent random variables with the same absolutely continuous 
distributions and zero mean values. Let $S(z)$ be a bounded analytic function and
\begin{equation}
\int_0^1\int_{-\pi}^{\pi}|S'(z)|^2(1-|z|)^{\alpha}dxdy<\infty,\quad 0<\alpha<1.
\end{equation}
	Then, by $X_n,\quad n=1,2,\dots$ it is possible to restore the 
function $S(z)$ by probability one.
\end{theorem}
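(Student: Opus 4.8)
The plan is to reduce the statement to the uniqueness theorem of Section~3 together with Kakutani's dichotomy. The link between them is the inequality $1-\rho(d\mu_k,d\nu_k)\ge A|f(z_k)-g(z_k)|^2$ recorded above: it says that the only obstruction to two bounded analytic functions $f,g$ inducing mutually singular observation laws is the convergence of $\sum_k|f(z_k)-g(z_k)|^2$, and the uniqueness theorem of Section~3 is precisely what forbids this convergence once $f\not\equiv g$. Fix notation: let $P$ be the common density of the errors $\xi_n$ (absolutely continuous, zero mean; we take it positive a.e., as in the usual noise models and as is implicit in the use of Kakutani's theorem); let $\mathcal F$ be the class of all bounded analytic $g$ on the disk with $\int_0^1\int_{-\pi}^{\pi}|g'(z)|^2(1-|z|)^{\alpha}\,dx\,dy<\infty$, so that $S\in\mathcal F$; and for $g\in\mathcal F$ let $\mathbb P_g$ denote the law on $\mathbf C^{\mathbb N}$ of the observation sequence $\bigl(g(z_n)+\xi_n\bigr)_{n\ge 1}$, i.e.\ $\mathbb P_g=\prod_{k\ge 1}\mu_k^g$ with $\mu_k^g$ having density $w\mapsto P(w-g(z_k))$. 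I will first show that $\{\mathbb P_g:g\in\mathcal F\}$ is a pairwise mutually singular family; recovery of $S$ with probability one then follows by a standard statistical argument.

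\emph{Step 1 (separation on the sample points).} First I would fix $f,g\in\mathcal F$ with $f\not\equiv g$ and put $h=f-g$. Then $h\in H^{\infty}$, $h$ is analytic, and from $|f'-g'|^{2}\le 2|f'|^{2}+2|g'|^{2}$ one gets $\int_0^1\int_{-\pi}^{\pi}|h'(z)|^{2}(1-|z|)^{\alpha}\,dx\,dy<\infty$; also $\{z_n\}$ and $E$ satisfy the hypotheses of the Section~3 theorem by assumption. Hence, were $\sum_n|h(z_n)|^{2}<\infty$, we would have $h(z_n)\to 0$, and the Section~3 theorem would force $h\equiv 0$, contradicting $f\not\equiv g$. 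Therefore $\sum_{n\ge 1}|f(z_n)-g(z_n)|^{2}=\infty$.

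\emph{Step 2 (singularity of the laws).} Next, for the same pair $f\ne g$: since $P>0$ a.e., every translate of the error law is equivalent to it, so $\mu_k^f\sim\mu_k^g$ for all $k$ and Kakutani's theorem is applicable. The inequality recalled above gives $1-\rho(\mu_k^f,\mu_k^g)\ge A|f(z_k)-g(z_k)|^{2}$, so by Step~1 $\sum_k\bigl(1-\rho(\mu_k^f,\mu_k^g)\bigr)=\infty$; since $0\le\rho(\mu_k^f,\mu_k^g)\le 1$ for every $k$, this forces $\prod_k\rho(\mu_k^f,\mu_k^g)=0$, whence $\mathbb P_f\perp\mathbb P_g$ by Kakutani's theorem. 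Thus $\{\mathbb P_g:g\in\mathcal F\}$ is pairwise mutually singular, and in particular $g\mapsto\mathbb P_g$ is injective on $\mathcal F$.

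\emph{Step 3 (the estimator).} It remains to produce a Borel map $\Phi$ on $\mathbf C^{\mathbb N}$ with $\mathbb P_S\bigl(\Phi((X_n))=S\bigr)=1$ for every $S\in\mathcal F$. The class $\mathcal F$ is separable --- it is a Borel subset of the Polish space of functions holomorphic on the disk, taken with uniform convergence on compact subsets --- so fix a countable dense subfamily $\{g_m\}\subset\mathcal F$. For $g_m\ne S$ the $\mathbb P_S$-martingale $\prod_{n\le N}P(X_n-g_m(z_n))/P(X_n-S(z_n))$ tends to $0$ $\mathbb P_S$-almost surely (Step~2), so $\mathbb P_S$-a.s.\ every $g_m$ distinct from $S$ is eventually rejected; a diagonal / minimum--contrast construction over the $g_m$ then selects, $\mathbb P_S$-a.s., a single limiting candidate, which by Step~1 (any other limit would induce a law orthogonal to $\mathbb P_S$) must equal $S$, and we let $\Phi$ return this limit. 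I expect this last step to be the main obstacle: Steps~1--2 are essentially formal once the Section~3 theorem is granted, whereas making the selection genuinely measurable --- and securing the uniformity needed to control how fast $\sum_{n\le N}|S(z_n)-g(z_n)|^{2}$ grows as $g$ ranges over a neighbourhood basis of $S$ --- requires care (alternatively one appeals to a Doob-type consistency theorem for the dominated, identifiable submodels). Finally, if ``restore $S(z)$ with probability one'' is read merely as identifiability of $S$ within $\mathcal F$ from the law of the data, then Steps~1--2 alone already establish the theorem.
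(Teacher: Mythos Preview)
Your argument is correct and follows exactly the route the paper takes: combine Kakutani's dichotomy with the inequality $1-\rho(\mu_k^f,\mu_k^g)\ge A|f(z_k)-g(z_k)|^2$ to reduce recoverability to the implication ``$\sum_k|f(z_k)-g(z_k)|^2<\infty\Rightarrow f\equiv g$'', and then invoke the Section~3 uniqueness theorem (via $h(z_n)\to 0$) to establish that implication. The paper offers no formal proof beyond the discussion preceding the theorem statement and in particular treats pairwise singularity of the laws (your Steps~1--2) as already constituting ``restore $S$ with probability one'', so your Step~3 goes strictly beyond what the paper supplies.
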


\begin{bibsection}
\begin{biblist}

\bib{MR808672}{article}{
    AUTHOR = {Chow, Y. S. and Trent, T. and Wang, Li Ming},
     TITLE = {A separating problem on function spaces},
	JOURNAL = {Journal of Mathematical Analysis and Applications},
    VOLUME = {111},
      YEAR = {1985},
    NUMBER = {1},
     PAGES = {177--187},
      ISSN = {0022-247X},
       DOI = {10.1016/0022-247X(85)90210-0},
       URL = {http://dx.doi.org/10.1016/0022-247X(85)90210-0},
}

\bib{MR1965913}{article}{
   AUTHOR = {{\`E}{\u\i}derman, V. Ya. and Ess{\'e}n, Matts},
     TITLE = {Uniqueness theorems for analytic and sub harmonic functions},
   JOURNAL = {Rossi\u\i skaya Academia Nauk. Algebra i Analyze},
    VOLUME = {14},
      YEAR = {2002},
    NUMBER = {6},
     PAGES = {1--88},
      ISSN = {0234-0852},
}

\bib{MR0225986}{book}{
    AUTHOR = {Carleson, Lennart},
     TITLE = {Selected problems on exceptional sets},
    SERIES = {Van Nostrand Mathematical Studies, No. 13},
 PUBLISHER = {D. Van Nostrand Co., Inc., Princeton, N.J.-Toronto,
              Ont.-London},
      YEAR = {1967},
     PAGES = {v+151},
}

\bib{MR0023331}{article}{
AUTHOR = {Kakutani, Shizuo},
     TITLE = {On equivalence of infinite product measures},
  	JOURNAL = {Annals of Mathematics. Second Series},
    VOLUME = {49},
      YEAR = {1948},
     PAGES = {214--224},
      ISSN = {0003-486X},
}
   
\bib{MR0031213}{article}{
    AUTHOR = {Wiener, Norbert},
     TITLE = {Extrapolation, {I}nterpolation, and {S}moothing of
              {S}tationary {T}time {S}eries. {W}ith {E}ngineering
              {A}pplications},
 PUBLISHER = {The Technology Press of the Massachusetts Institute of
              Technology, Cambridge, Mass},
      YEAR = {1949},
     PAGES = {ix+163},
}

\bib{6}{article}{
    AUTHOR = {Vagharshakyan A.},
     TITLE = {On zeros of analytic functions from some classes},
 	JOURNAL = {Izvestia AN Armenia},
    NUMBER = {5-6},
    VOLUME = {13},
     PAGES = {177-187},
      YEAR = {1978},}

\bib{7}{article}{
    AUTHOR = {Vagharshakyan A.},
     TITLE = {On boundary behavior some classes of harmonic functions},
 	JOURNAL = {Izvestia AN Armenia},
    NUMBER = {1},
    VOLUME = {10},
     PAGES = {423-427},
      YEAR = {1975},
}

\bib{MR0348138}{article}{
    AUTHOR = {Preston, Christopher J.},
     TITLE = {A theory of capacities and its application to some convergence
              results},
	JOURNAL = {Advances in Mathematics},
    VOLUME = {6},
      YEAR = {1971},
     PAGES = {78--106},
      ISSN = {0001-8708},
}

\bib{9}{article}{
    AUTHOR = {Vagharshakyan A.},
     TITLE = {On relation of discrete and continuous time systems} ,
 	JOURNAL = {ZNS PDMI St. Petersburg},
    NUMBER = {},
    VOLUME = {307},
     PAGES = {1-18},
      YEAR = {2003},
}

\end{biblist}
\end{bibsection}
\end{document}